%%%%%%%%%%%%%%%%%%%%%%%%%%%%%%%%%%%%%%%%%%%%%%%%%%%%%%%%%%%%%%%%%%%%%%%%%%%%%%%%
%2345678901234567890123456789012345678901234567890123456789012345678901234567890
%        1         2         3         4         5         6         7         8

\documentclass[letterpaper, 10 pt, conference]{ieeeconf}  % Comment this line out if you need a4paper
\usepackage{graphicx}
\usepackage{epstopdf}
\usepackage{amsmath,amsfonts,amssymb}
\usepackage{algorithmic}
\usepackage[ruled,vlined]{algorithm2e}
\usepackage{marvosym}
\usepackage{mathrsfs}
\usepackage{cite}
\usepackage{array}
\usepackage{soul}
\usepackage{color}
\usepackage{graphics} % for pdf, bitmapped graphics files
\usepackage{epsfig} % for postscript graphics files
\usepackage{mathptmx} % assumes new font selection scheme installed
\usepackage{times} % assumes new font selection scheme installed
\usepackage{amsmath} % assumes amsmath package installed
\usepackage{amssymb}  % assumes amsmath package installed
\usepackage{hyperref}
\usepackage{float}
\usepackage{multirow}
\usepackage{paralist, tabularx}
\usepackage[dvipsnames]{xcolor}
\usepackage{soul}
\usepackage[normalem]{ulem}
\newtheorem{remark}{Remark}
\newtheorem{definition}{Definition}
\newtheorem{assm}{Assumption}
\newtheorem{lemma}{Lemma}
\linespread{0.93}

\IEEEoverridecommandlockouts                              % This command is only needed if 
                                                          % you want to use the \thanks command

\overrideIEEEmargins                                      % Needed to meet printer requirements.

%In case you encounter the following error:
%Error 1010 The PDF file may be corrupt (unable to open PDF file) OR
%Error 1000 An error occurred while parsing a contents stream. Unable to analyze the PDF file.
%This is a known problem with pdfLaTeX conversion filter. The file cannot be opened with acrobat reader
%Please use one of the alternatives below to circumvent this error by uncommenting one or the other
%\pdfobjcompresslevel=0
%\pdfminorversion=4

% See the \addtolength command later in the file to balance the column lengths
% on the last page of the document

% The following packages can be found on http:\\www.ctan.org
%\usepackage{graphics} % for pdf, bitmapped graphics files
%\usepackage{epsfig} % for postscript graphics files
%\usepackage{mathptmx} % assumes new font selection scheme installed
%\usepackage{times} % assumes new font selection scheme installed
%\usepackage{amsmath} % assumes amsmath package installed
%\usepackage{amssymb}  % assumes amsmath package installed

\title{\LARGE \bf
Safe Distributed Learning-Enhanced Predictive Control for Multiple Quadrupedal Robots
% A Distributed Control Approach for Formation Maintenance \wei{?? maintenance??}
% and Collision Avoidance in Multi-Quadruped Robots 
% \wei{waht are the key features of your paper? you read a lot of papers, why this title is so bad..}

}

\author{Weishu Zhan$^{1}$, Zheng Liang$^2$, Hongyu Song$^{1}$ and Wei Pan$^{1}$
\textit{}
\thanks{Weishu Zhan, Hongyu Song and Wei Pan are with the Department of Computer Science, The University of Manchester, Manchester M13 9PL, United Kingdom. Zheng Liang is with Zhishen Tech Innovation, China.
}
% \thanks{\Letter  Corresponding author. E-mail: wei.pan@manchester.ac.uk}
}
\begin{document}

\maketitle

\thispagestyle{empty}
\pagestyle{empty}

%%%%%%%%%%%%%%%%%%%%%%%%%%%%%%%%%%%%%%%%%%%%%%%%%%%%%%%%%%%%%%%%%%%%%%%%%%%%%%%%
\begin{abstract}
Quadrupedal robots exhibit remarkable adaptability in unstructured environments, making them well-suited for formation control in real-world applications. However, keeping stable formations while ensuring collision-free navigation presents significant challenges due to dynamic obstacles, communication constraints, and the complexity of legged locomotion. This paper proposes a distributed model predictive control framework for multi-quadruped formation control, integrating Control Lyapunov Functions to ensure formation stability and Control Barrier Functions for decentralized safety enforcement. To address the challenge of dynamically changing team structures, we introduce Scale-Adaptive Permutation-Invariant Encoding (SAPIE), which enables robust feature encoding of neighboring robots while preserving permutation invariance. Additionally, we develop a low-latency Data Distribution Service-based communication protocol and an event-triggered deadlock resolution mechanism to enhance real-time coordination and prevent motion stagnation in constrained spaces. Our framework is validated through high-fidelity simulations in NVIDIA Omniverse Isaac Sim and real-world experiments using our custom quadrupedal robotic system, XG. Results demonstrate stable formation control, real-time feasibility, and effective collision avoidance, validating its potential for large-scale deployment. Code are available at the provided link \url{https://github.com/YONEX4090/MultiQuadruped-Control}.

% that our approach ensures stable formation control, effective collision avoidance, and real-time feasibility in dynamic, cluttered environments, highlighting its potential for scalable deployment in large quadrupedal robot teams.

\end{abstract}

%%%%%%%%%%%%%%%%%%%%%%%%%%%%%%%%%%%%%%%%%%%%%%%%%%%%%%%%%%%%%%%%%%%%%%%%%%%%%%%%
\section{Introduction}

Quadruped robots, with their inherent redundant degrees of freedom, possess the capability to adapt their gait patterns to navigate diverse terrains, making them exceptionally suitable for unstructured environments.
Formation control for multiple quadrupedal robotic systems offers significant advantages in applications within these challenging domains (e.g., disaster response \cite{tian2020search}, material transport \cite{pang2012agent}, warehouse management \cite{li2023double}), when compared to alternative multirobot system (MRS) platforms.
However, keeping stable formations is challenging due to terrain variability and dynamic obstacles.
Real-world deployment is further complicated by communication constraints and inefficiencies in data distribution protocols.
To the best of our knowledge, the literature contains only a limited number of studies addressing formation control for multiple quadrupedal robots, largely due to these significant technical challenges \cite{liu2024optimization}.

Traditional approaches, such as leader-follower formations \cite{xiao2016formation} and virtual structures \cite{zhou2018agile}, perform well in structured environments but struggle in highly dynamic scenarios. Model Predictive Control (MPC) is 
% gained traction
a promising method for ensuring the safety and stability of swarm formation control, widely adopted in vehicles and quadrotors, as well as the zero-shot generalizability (i.e., the ability to apply the same controller to new scenarios without additional training).
%due to its ability to handle multi-constraint
% \wei{multi-constraint??} \cite{}, and 
Its decentralized variant (DMPC) \cite{ferranti2022distributed} enables local decision-making while maintaining global formation stability. However, the high computational cost of MPC hinders real-time scalability.  Control Barrier Functions (CBFs) \cite{ames2019control} offer strong safety guarantees but can be overly conservative, sometimes leading to deadlocks in multi-agent scenarios \cite{wang2017safety, qin2021learning}. Reinforcement learning (RL)-based controllers \cite{zhao2021usv} provide adaptability but often require extensive data to pre-train the policy and often lack formal safety assurances, making them risky for real-world deployment. 
% Hence, achieving a balance between formation maintenance, collision avoidance, and computational efficiency remains an open challenge in dynamic, unstructured environments.
% \begin{figure}[t]
% 	\centering
% 	\includegraphics[width=\columnwidth]{}
% 	\caption{S}
% 	\label{fig1}
% \end{figure}
\begin{figure}
    \centering
    \vspace{6pt}

    \includegraphics[width=1\linewidth]{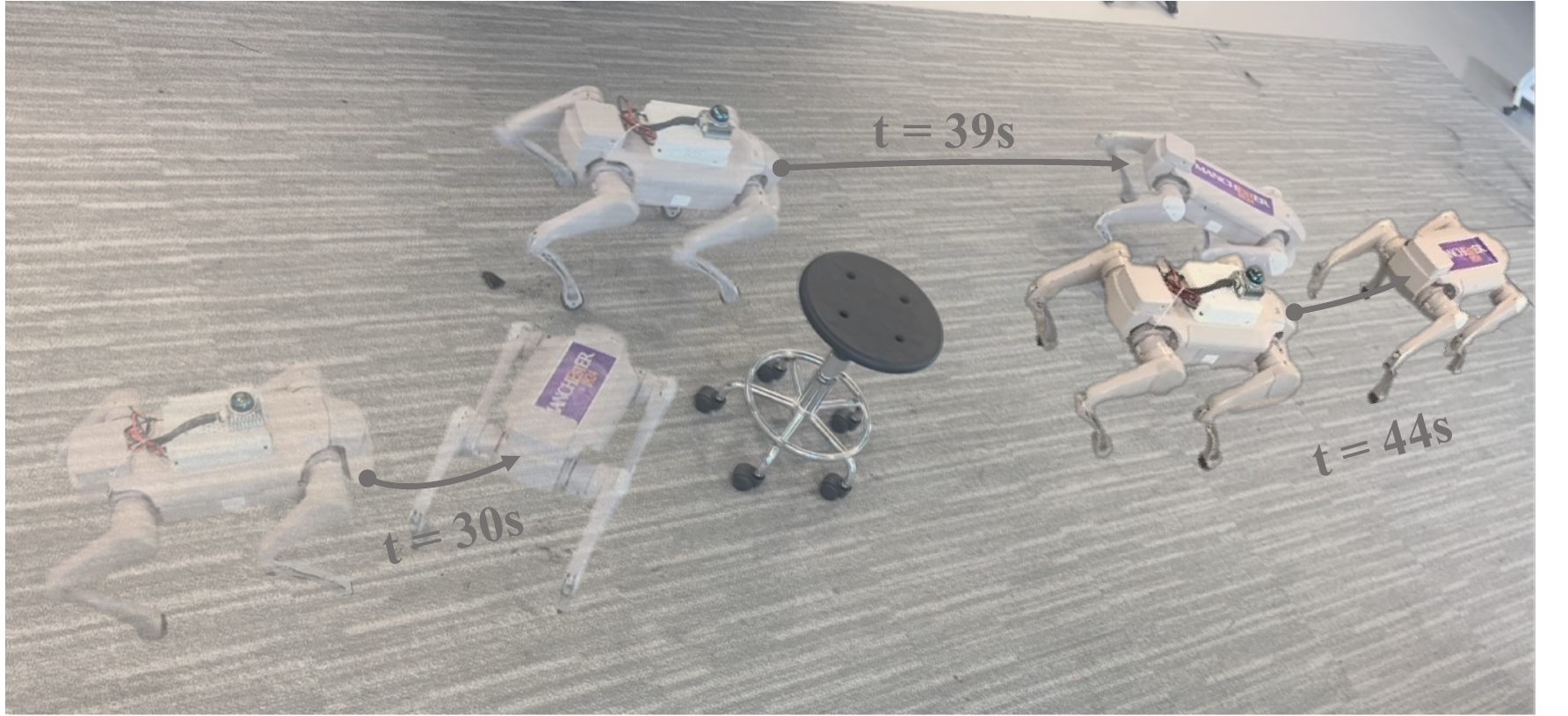}
    \vspace{-14pt}
    \caption{Time-lapse visualization of our quadrupedal formation control system navigating around an obstacle. Three successive positions (t=30s, t=39s, t=44s) show how the robots keeping formation while smoothly avoiding the central obstacle, demonstrating the effectiveness of our distributed control framework in real-world environments.}
    \vspace{-18pt}
    \label{fig:enter-label}
\end{figure}

\vspace{-0.2cm}
\subsection{Related Work}
\vspace{-3pt}
Ren et al. \cite{ren2007information} proposed a consensus-based approach that effectively maintains formation but lacks independent obstacle avoidance capabilities. Zhou et al. \cite{zhou2018agile} utilized artificial potential fields (APFs), modeling obstacles as high-potential regions that generate repulsive forces to steer agents away and ensure collision-free trajectories. However, their reliance on local interactions can lead to deadlocks and limited trajectory optimization. To improve adaptability, hybrid methods have been proposed, such as Qi et al. \cite{qi2022formation}, who integrated consensus theory with pigeon-inspired obstacle avoidance, and approaches combining deep reinforcement learning (DRL) with stochastic braking \cite{zhao2021usv}. RL approaches require extensive training data and struggle to enforce safety, limiting real-world applicability.

MPC has emerged as a powerful tool for multi-robot coordination, balancing formation control and obstacle avoidance through constrained optimization. Zheng et al. \cite{zheng2016distributed} employ DMPC for platoon control under unidirectional communication, while Ferranti et al. \cite{ferranti2022distributed} propose a nonconvex alternating direction method for trajectory optimization. Lyapunov-based DMPC frameworks \cite{wei2019distributed} enhance disturbance handling, yet computational complexity remains a major challenge for large-scale multi-robot systems. Traditional nonlinear solvers suffer from high computational overhead, particularly with long prediction horizons and limited onboard resources. To address this, explicit MPC \cite{alonso2020explicit} precomputes control laws offline, improving real-time efficiency at the cost of exponential offline complexity and strict model accuracy requirements. More recently, learning-based policy refinements have introduced adaptive optimization techniques, reducing computational burdens while preserving closed-loop optimality.

For quadruped robot formations, the combination of nonlinear dynamics and stability constraints complicates coordination. Traditional stability guarantees via Lyapunov functions and CBFs \cite{prajna2006barrier} are challenging to derive for legged locomotion, while precise foot placement and force control remain essential for unstructured terrain traversal \cite{arena2021learning}. RL and hybrid methods \cite{westenbroek2022lyapunov} integrate Zero Moment Point and iterative learning control \cite{hu2016learning} to improve stability, but real-world safety remains difficult to guarantee. Computationally intensive methods such as MPC often struggle with real-time execution due to onboard constraints, whereas neural controllers shift computations offline, enabling efficient deployment with integrated stability and safety guarantees. Although \cite{xu2023distributed, liu2024optimization} have explored multi-quadruped robot obstacle avoidance, gait synchronization, and flocking control, they lack theoretical guarantees on formation error convergence and do not fully address the integration of formation control and collision avoidance.

Our framework integrates MPC-CLF for stability, SAPIE for adaptive neighborhood encoding, and a deep neural network-based CBF for decentralized safety, ensuring real-time feasibility with formal safety guarantees. Traditional methods lack explicit obstacle avoidance, encounter deadlocks, or impose high computational costs, while our approach enforces safety constraints, ensures motion feasibility in dense formations, and efficiently encodes dynamic neighbor relationships. By reducing the computational burden of DMPC and explicitly guaranteeing formation error convergence, our approach provides a scalable, efficient, and robust solution for quadrupedal formation control in complex, dynamic environments.

\vspace{-10pt}
\subsection{Contribution}
\vspace{-5pt}
To ensure the safe navigation of quadruped robot formations, this paper presents a safe distributed learning-enhanced predictive control framework. Our contributions are as follows:

\begin{itemize}
\item We propose a distributed model predictive control framework for formation control and collision avoidance of multiple quadrupedal robots. We provide a theoretical proof that the formation error can converge to zero.

\item We introduce a Scale-Adaptive Permutation-Invariant Encoding (SAPIE) mechanism to address feature dimensional mismatches in multi-quadrupedal robots. Furthermore, safety and forward invariance can be ensured by combining SAPIE with deep neural networks and CBF.
\item We develop a custom quadrupedal robotic system platform, XG, which incorporates a low-latency Data Distribution Service protocol with a Peer-to-Peer architecture to enable reliable, real-time communication. An event-triggered mechanism further detects and resolves deadlocks in narrow passages or complex terrains, improving overall coordination.
\item We develop a digital twin in NVIDIA Omniverse Isaac Sim that closely replicates real-world properties. By deploying our custom-designed quadruped robots and communication topology, we conducted both simulation and real-world experiments to validate the effectiveness and robustness of the proposed controller in collision avoidance tasks, as shown in Fig. \ref{fig:enter-label}.
\end{itemize}
\section{Problem Formulation}
\label{sec:2}
This section presents the hierarchical control framework, the control-affine system for quadruped robots, and key safety concepts, including decentralized CBFs. We then formalize the formation control problem and outline key assumptions.

\vspace{-0.2cm}
\subsection{Quadruped Robot Model}

The control-affine system for a typical model of the $i$-th quadruped robot as follows:
\vspace{-0.2cm}
\begin{align}
\label{Eq. 1}
\dot{\boldsymbol{x}}_i&=f(\boldsymbol{x}_i)+g(\boldsymbol{x}_i) \boldsymbol{u}_i 
\\
\boldsymbol{o}_i&=\left\{\boldsymbol{x}_j \mid j \in \mathcal{N}_i\right\}
\end{align}
where $\boldsymbol{x}_i \in \mathbb{X}_i \subseteq \mathbb{R}^n$ and $\boldsymbol{u}_i \in \mathbb{U}_i \subseteq \mathbb{R}^m$ is the state vector and the input of the $i$-th quadruped robot. For each agent $i$, We define $\mathcal{N}_i$ as the set of neighboring robots, and the local
 observation $\boldsymbol{o}_i \in \mathbb{R}^{n \times\left|\mathcal{N}_i\right|}$ consists of the states of $\mathcal{N}_i$  neighborhood robots. $f$ and $g$ are locally continuous Lipschitz functions.
\begin{remark}
    The whole-body dynamics of a quadruped robot are highly complex, making DMPC implementation challenging. To address this, we adopt a hierarchical control structure, where DMPC functions as the high-level controller, while Walk-These-Ways (WTW) \cite{margolis2023walk} simulates low-level motor control. WTW learns motion behaviors with configurable posture, gait, and velocity, replicating built-in controller parameters. During high-level training, parallel GPU execution enhances computational efficiency, ensuring real-time DMPC optimization. 
\end{remark}

\textcolor{black}{Irregularities in system dynamics can compromise stability and the feasibility of control synthesis. To ensure well-posedness and the existence of solutions, it is necessary to establish the regularity of the functions 
$f$ and $g$. In particular, we impose local Lipschitz continuity, a fundamental property that ensures the bounded variation of system dynamics under small perturbations in the state.}
% \wei{please add some transition words to introduce definition 1}

\begin{definition}
A function $f: \mathbb{R}^n \rightarrow \mathbb{R}^n$ is locally Lipschitz continuous at $\boldsymbol{x}_i \in \mathbb{X}_i$ if there exist constants $M>0$ and $\delta>0$ such that, for all $\boldsymbol{x}_i, \boldsymbol{x}_i^{\prime}$ satisfying $\left\|\boldsymbol{x}_i-\boldsymbol{x}_i^{\prime}\right\| \leq \delta:$
$\left\|f\left(\boldsymbol{x}_i\right)-f\left(\boldsymbol{x}_i^{\prime}\right)\right\| \leq M\left\|\boldsymbol{x}_i-\boldsymbol{x}_i^{\prime}\right\|, 
\left\|g\left(\boldsymbol{x}_i\right)-g\left(\boldsymbol{x}_i^{\prime}\right)\right\| \leq M\left\|\boldsymbol{x}_i-\boldsymbol{x}_i^{\prime}\right\|.
$
\hfill $\blacksquare$
% $$
% \begin{aligned}
% \left\|f\left(\boldsymbol{x}_i\right)-f\left(\boldsymbol{x}_i^{\prime}\right)\right\| & \leq M\left\|\boldsymbol{x}_i-\boldsymbol{x}_i^{\prime}\right\| \\
% \left\|g\left(\boldsymbol{x}_i\right)-g\left(\boldsymbol{x}_i^{\prime}\right)\right\| & \leq M\left\|\boldsymbol{x}_i-\boldsymbol{x}_i^{\prime}\right\| \hfill \blacksquare
% \end{aligned}
% $$

\end{definition}

\begin{figure}[t]
    \centering
    \includegraphics[width=\columnwidth]{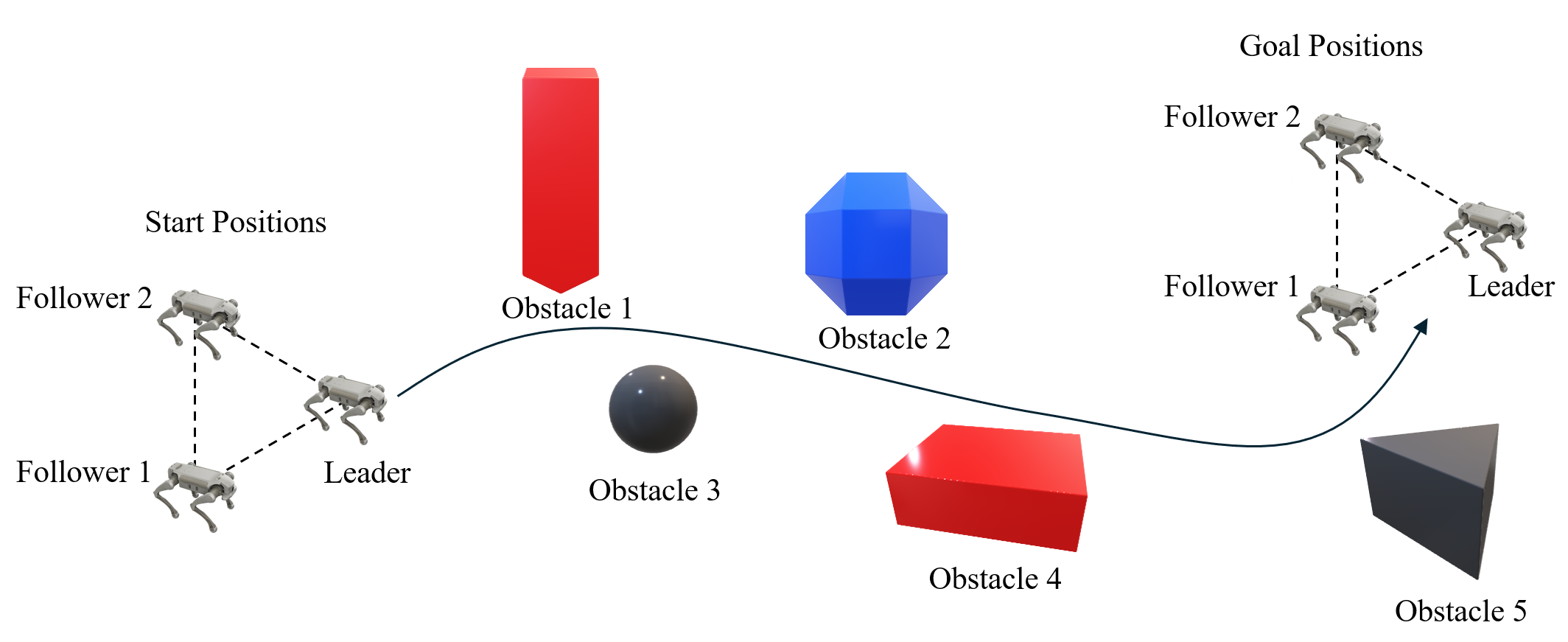}
    \vspace{-20pt}
    \caption{Quadruped robots cooperative formation and collision avoidance.}
    \vspace{-0.7cm}
    \label{fig2}
\end{figure}

 \subsection{Problem Statement}

In this paper, safe control refers to devising control inputs to guarantee that: i) states of the dynamic system never reach the unsafe region; ii) control inputs never violate the physical constraints of the system. We consider the formation control problem for a multi-agent quadruped robot system operating in an unknown environment, as shown in Fig. \ref{fig2}. The objective is to navigate the robots from their initial positions to designated goal positions while maintaining formation and ensuring collision-free movement with both obstacles and neighboring robots. Each quadruped robot perceives obstacles using onboard sensors and communicates with nearby robots to exchange state information.

\begin{definition}
A continuous function $\alpha:[0, a) \rightarrow[0, \infty)$ belongs to class $\kappa$ if it is strictly increasing and satisfies $\alpha(0)=0$. An extended class $\kappa$ function $\beta: \mathbb{R} \rightarrow \mathbb{R}$ is also strictly increasing and satisfies $\beta(0)=0$.\hfill $\blacksquare$
\end{definition}

\begin{definition}
The Lie derivative of a function $\eta(\boldsymbol{x})$ along a vector field $\xi(\boldsymbol{x})$ is defined as:
$
L_{\xi} \eta(\boldsymbol{x})=\frac{\partial \eta(\boldsymbol{x})}{\partial \boldsymbol{x}} \xi(\boldsymbol{x}).
$
\hfill $\blacksquare$
\end{definition}

\begin{definition}[CBF\cite{ames2016control}]
% \wei{hi}
A continuously differentiable function $h: \mathbb{R}^n \rightarrow \mathbb{R}$ is a CBF if it defines a safe set:
$
{S}:=\{\boldsymbol{x} \mid h(\boldsymbol{x}) \geq 0\}
$
and there exists an extended class $\kappa$ function 
$\alpha(\cdot): \mathbb{R} \rightarrow \mathbb{R}$ such that:
$
\dot{h}(\boldsymbol{x}, \boldsymbol{u})+\kappa(h(\boldsymbol{x})) \geq 0, \forall \boldsymbol{x} \in \mathscr{H}
$.
\hfill $\blacksquare$
\end{definition}

Based on the above definitions, we give the definition on safety for multiple quadrupedal robots as follows:  
\begin{definition}\label{definition:cbfloss}
A multi-agent system \eqref{Eq. 1} is safe at time $t$  if each agent $i$ satisfies:
$$
\boldsymbol{x}_i(t) \in \mathbb{X}_{safe, i}, \quad \boldsymbol{u}_i(t) \in \mathbb{U}_{safe, i}, \quad \forall t \geq 0
$$
where $\mathbb{X}_{safe, i}$ and $\mathbb{U}_{safe, i}$ denote the safe state and safe control input sets for agent $i$, respectively. The system maintains safety if the minimum safety distance s between agents satisfies:
$
d\left(\boldsymbol{x}_i, \boldsymbol{o}_i\right) \geq s, \quad \forall i. 
$
If the safe set $S \subseteq \mathbb{X}_{safe,i}$ is forward invariant, meaning:
$$
\boldsymbol{x}_i(0) \in S \Rightarrow \boldsymbol{x}_i(t) \in S, \quad \forall t \geq 0
$$
then safety is preserved over time. To ensure this, we introduce a decentralized CBF to enforce the forward invariance of the safe set.
A decentralized CBF $h_i: \mathbb{X}_i \rightarrow \mathbb{R}$ defines the zero-superlevel safe set:
$$
\begin{aligned}
S_i=\left\{\boldsymbol{x}_i \mid h_i\left(\boldsymbol{x}_i, 
\boldsymbol{o}_i\right) \geq 0\right\}
\\
\partial S_i=\left\{\boldsymbol{x}_i \mid h_i\left(\boldsymbol{x}_i, \boldsymbol{o}_i\right) = 0\right\}
\\
\operatorname{Int}\left(S_i\right)=\left\{\boldsymbol{x}_i \mid h_i\left(\boldsymbol{x}_i, \boldsymbol{o}_i\right) > 0\right\}
\end{aligned}
$$
where $\partial S_i$ is the boundary set, and $\operatorname{Int}\left(S_i\right)$ is the interior set, where $h_i\left(\boldsymbol{x}_i, \boldsymbol{o}_i\right)>0$.
To ensure $S_i$ is forward invariant, $h_i$ must satisfy the CBF condition:
$$
\sup _{\boldsymbol{u}_i \in \mathbb{U}_i}\left[L_{f} h_i\left(\boldsymbol{x}_i, \boldsymbol{o}_i\right)+L_{g} h_i\left(\boldsymbol{x}_i, \boldsymbol{o}_i\right) \boldsymbol{u}_i\right] \geq-\alpha\left(h_i\left(\boldsymbol{x}_i, \boldsymbol{o}_i\right)\right)
$$
where $\alpha(\cdot)$ is an extended class $\kappa$ function, and $L_{f}, L_{g}$ are Lie derivatives. If this condition holds, $S_i$ remains forward invariant, ensuring individual agent safety and guaranteeing the safety of the entire multi-agent system.\hfill $\blacksquare$
\end{definition}

\textcolor{black}{Since our work focuses on designing a fast policy learning algorithm for DMPC, we introduce two essential assumptions to ensure the controllability of formation control.}
% \wei{again, some transition sentences to introduce the assumptions}
% \wei{begin...assm}
\begin{assm}
    The communication network is time-invariant and delay-free, meaning the topology remains unchanged during formation maintenance and obstacle avoidance. Neighboring robots share state information instantaneously. Each robot has complete awareness of its neighbors' states. 
\end{assm}
% \textit{Assumption 1:}
% The communication network is time-invariant and delay-free, meaning the topology remains unchanged during formation maintenance and obstacle avoidance. Neighboring robots share state information instantaneously. Each robot has complete awareness of its neighbors' states. 

% For any connected robots $i$ and $j$, the desired relative position $\delta_{i j}$ satisfies:
% \begin{equation*}
% \begin{aligned}
% \lim _{t \rightarrow \infty} \boldsymbol{e}_{\mathcal{N}_i}&=\lim _{t \rightarrow \infty} \operatorname{col}_{j \in \mathcal{N}_i}\left\|\boldsymbol{p}_i-\boldsymbol{p}_j-\delta_{i j}\right\|=0
% \end{aligned}
% \end{equation*}
% where $\boldsymbol{p}_i$ represents the position of the $i$-th quadruped robot, and $\boldsymbol{e}_{\mathcal{N}_i}$ is the formation tracking error of $i$-th quadruped robot.

\begin{assm}
    Each quadruped robot $i$ has a local feedback control policy $\boldsymbol{u}_i$ that stabilizes its formation tracking error while ensuring safety. This policy does not depend on a specific communication topology; information exchange between neighboring robots can be either bidirectional or unidirectional, provided stability is maintained. We assume there exists a feedback policy (which we aim to obtain via our DMPC approach) that stabilizes formation error to ensure both formation stability and obstacle avoidance.
\end{assm}
% \textit{Assumption 2:}
% Each quadruped robot $i$ has a local feedback control policy $\boldsymbol{u}_i$ that stabilizes its formation tracking error while ensuring safety. This policy does not depend on a specific communication topology; information exchange between neighboring robots can be either bidirectional or unidirectional, provided stability is maintained. We assume this policy can be efficiently learned via DMPC to ensure both formation stability and obstacle avoidance.

\section{DISTRIBUTED CONTROLLER DESIGN}
\label{sec:3}

% The objective of this section is to present a distributed control framework for the keeping\wei{??}and avoidance of quadruped robot formation \wei{formation keeping and collision avoidance}. We integrate \wei{is this new?} MPC with CLFs to ensure stable formation tracking while incorporating CBFs to enforce decentralized safety constraints. A NN-CBF architecture \wei{is this new?}is introduced to enhance adaptability to dynamic multi-agent interactions, addressing challenges posed by varying agent densities and environmental complexities. \wei{say something special you contribute}Additionally, we propose an event-triggered mechanism to resolve deadlocks that arise when nominal control inputs conflict with CBF constraints, ensuring continuous motion without compromising stability. \wei{maybe you can just say this}The following subsections detail the MPC formulation with CLF constraints, the learning-based CBF design, and the deadlock resolution strategy. Fig. \ref{fig3} shows the distributed control framework.

The objective of this section is to present a distributed control framework for formation keeping and collision avoidance of quadruped robots. We integrate MPC with CLFs to ensure stable formation tracking while incorporating CBFs to enforce safety constraints. A Neural Network-based CBF architecture is introduced to enhance adaptability to dynamic multi-agent interactions, addressing challenges posed by varying agent densities and environmental complexities. Additionally, we propose an event-triggered mechanism to resolve deadlocks that arise when nominal control inputs conflict with CBF constraints, ensuring continuous motion without compromising stability. The following subsections detail the MPC formulation with CLF constraints, the learning-based CBF design, and the deadlock resolution strategy. Fig.~\ref{fig:framework} shows the distributed control framework

\begin{figure}[t]
	\centering
	\includegraphics[width=\columnwidth]{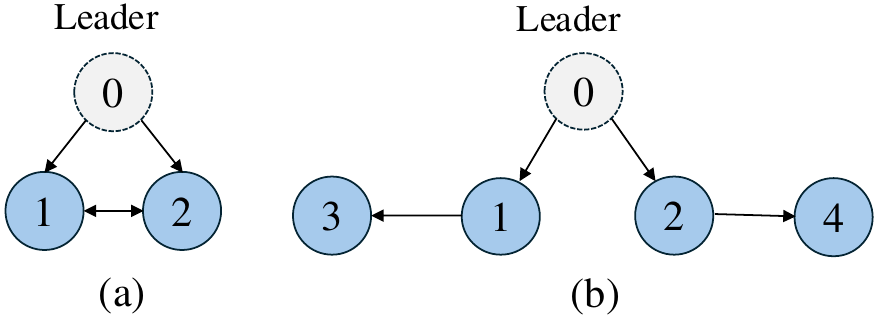}
    \vspace{-0.6cm}
	\caption{
    Exemplary scenarios of communication graphs with (a) N = 3 and (b) N = 5. The arrows represent the directions of information exchange among robots. Communications are instantaneously exchanged between neighboring robots at each step.}
	\label{fig3}
 \vspace{-21pt}
\end{figure}
\vspace{-3pt}
\subsection{MPC With CLF Constraints}\label{sec:mpc-clf}
\vspace{-3pt}
We develop a distributed MPC framework with CLF constraints, extending its application from single-robot locomotion to multirobot formation control. This approach ensures the preservation of the desired formation and maintains velocity consistency across quadrupedal robots.
 % to maintain formation.
% \wei{is this new} 
 
 % This approach ensures the preservation of the desired formation and the consistency of the quadruped robots' velocity.

\begin{figure*}[ht]
	\centering
	\includegraphics[width=\textwidth]{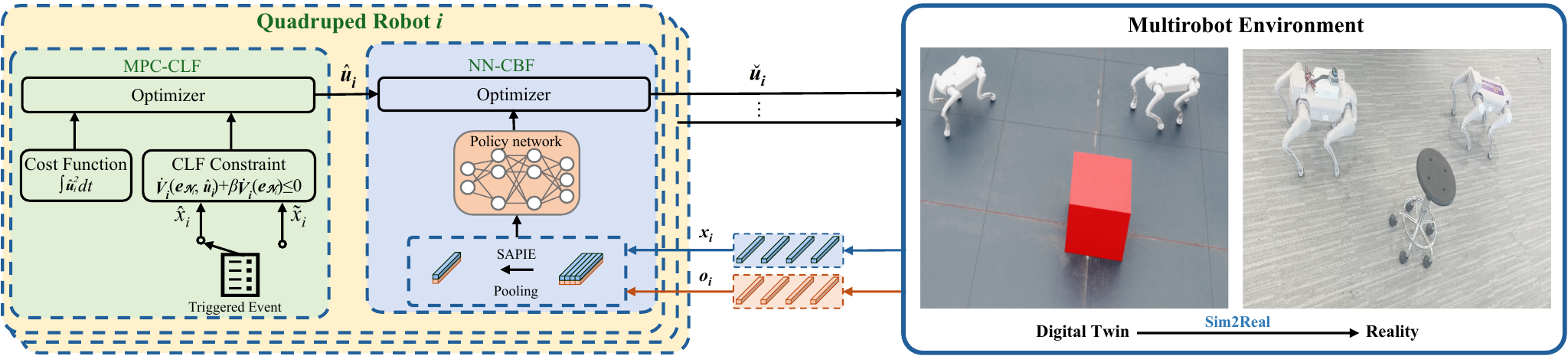}
    \vspace{-0.6cm}
	\caption{Overview of the presented approach. Left: The distributed control architecture for each quadruped robot combines an MPC-CLF optimizer for formation stability with an SAPIE-CBF Neural Networks for safety enforcement. The SAPIE module processes variable neighbor configurations through a permutation-invariant encoding mechanism, while the event-triggered mechanism resolves potential deadlocks. Right: Multirobot environment validation showing the seamless transition from digital twin simulation to physical deployment, demonstrating the effectiveness of our framework in both simulated and real-world scenarios.
    }
    \label{fig:framework}
        \vspace{-0.5cm}
\end{figure*}

% In this paper, to reduce the communication load and computation burden, we consider a fully distributed manner: each agent only considers communication with its neighbors as exemplified in Fig.~\ref{fig6}. Therefore, we consider the following nominal state $\hat{\boldsymbol{x}}_i$ is computed from the states of its neighboring robots as:
% $\hat{\boldsymbol{x}}_i \in \mathbb{R}^4$ is defined as
% \begin{equation}
% \begin{aligned}
% \hat{\boldsymbol{x}}_i=\frac{\sum_{j=1, j \neq i}^n c_{i j}\left(x_j+\Delta_{ij}\right)+s_i\left(x_r+\Delta_{i r}\right)}{\sum_{j=1, j \neq i}^n c_{i j}+s_i}
% \end{aligned}
% \label{Eq.nominalstate}
% \end{equation}
% where $\hat{\boldsymbol{x}}_i=\left(\boldsymbol{p}_{x, i}, \boldsymbol{p}_{y, i}, \boldsymbol{\theta_i}, \boldsymbol{v_i}\right) \in \mathbb{R}^4$, with $(\boldsymbol{p}_{x, i}, \boldsymbol{p}_{y, i})$ representing the position of the quadruped robot $i$. The parameter $c_{i j}$ indicates the connection status, where $c_{i j}=1$ for $j \in \mathcal{N}_i$ and $c_{i j}=0$ otherwise, $s_i=1$ represents the robot $i$ receives the position information of the leader. The reference state $x_r$ is obtained from the leader. $\boldsymbol{\Delta}_{i j}$ and $\Delta_{i r}$ are the formation offset, which are determined by the formation shape and size.

In this paper, to reduce the communication load and computation burden, we consider a fully distributed approach: each agent only communicates with its neighbors as exemplified in Fig.~\ref{fig6}. Therefore, we define the nominal state $\hat{\boldsymbol{x}}_i \in \mathbb{R}^4$ computed from the states of its neighboring robots as:
\begin{equation}
\hat{\boldsymbol{x}}_i=\frac{\sum_{j=1, j \neq i}^n c_{i j}\left(x_j+\Delta_{ij}\right)+s_i\left(x_r+\Delta_{i r}\right)}{\sum_{j=1, j \neq i}^n c_{i j}+s_i}
\label{Eq.nominalstate}
\end{equation}
where $\hat{\boldsymbol{x}}_i=\left(\boldsymbol{p}_{x, i}, \boldsymbol{p}_{y, i}, \boldsymbol{\theta_i}, \boldsymbol{v_i}\right) \in \mathbb{R}^4$, with $(\boldsymbol{p}_{x, i}, \boldsymbol{p}_{y, i})$ representing the position of quadruped robot $i$. The parameter $c_{i j}$ indicates the connection status, where $c_{i j}=1$ for $j \in \mathcal{N}_i$ and $c_{i j}=0$ otherwise. Parameter $s_i=1$ indicates that robot $i$ receives position information from the leader. The reference state $x_r$ is obtained from the leader. The terms $\boldsymbol{\Delta}_{i j}$ and $\Delta_{i r}$ represent formation offsets, which are determined by the formation shape and size.

We define the formation error as $\boldsymbol{e}_{\mathcal{N}_i}$=$\boldsymbol{x}_i-\hat{\boldsymbol{x}}_i$. The MPC with CLF constraints is formulated as follows:
\begin{equation}
\begin{aligned}
\min \ &  J_{\mathrm{CLFs}}\left(\hat{\boldsymbol{u}}_i\right)=\int_t^{t+T} \hat{\boldsymbol{u}}_i^2 d t \\
\text { s.t. } &  \dot{\boldsymbol{x}}_i=f(\boldsymbol{x}_i)+g(\boldsymbol{x}_i) \hat{\boldsymbol{u}_i} \\
& \dot{V}_i\left(\boldsymbol{e}_{\mathcal{N}_i}, \hat{\boldsymbol{u}}_i\right)+\beta V_i\left(\boldsymbol{e}_{\mathcal{N}_i}\right) \leq 0 \\
& \boldsymbol{v}_{\min } \leq \boldsymbol{v}_i \leq \boldsymbol{v}_{\max } \\
& \boldsymbol{u}_{\min } \leq \boldsymbol{u}_i \leq \boldsymbol{u}_{\max }
\end{aligned}
\label{Eq. 3}
\end{equation}
where $V_i\left(\boldsymbol{e}_{\mathcal{N}_i}\right)=\boldsymbol{e}_{\mathcal{N}_i}^T \boldsymbol{e}_{\mathcal{N}_i}$ is the Lyapunov function whose properties are given in the following Definition~\ref{definition:clf}; $\hat{\boldsymbol{u}}$ is the control input; $\boldsymbol{v}_{\min }$, $\boldsymbol{v}_{\max }$, $\boldsymbol{u}_{\min }$, and $\boldsymbol{u}_{\max }$ denote the upper and lower bounds of $\boldsymbol{v}_i$ and $\boldsymbol{u}_i$ respectively.
\begin{definition}[\cite{minniti2021adaptive}]\label{definition:clf}

    A continuously differentiable function $V: \mathbb{X} \rightarrow \mathbb{R}$ is considered a CLF for a quadruped robot if there exist extended class $\kappa$ functions $\beta_1, \beta_2, \beta_3$ such that for all $\boldsymbol{e}_{\mathcal{N}_i} \in \mathbb{X}$ :

(i) Positive Definiteness:
\vspace{-4pt}
$$
\beta_1\left(\left\|\boldsymbol{e}_{\mathcal{N}_i}\right\|\right) \leq V\left(\boldsymbol{e}_{\mathcal{N}_i}\right) \leq \beta_2\left(\left\|\boldsymbol{e}_{\mathcal{N}_i}\right\|\right)
$$
\vspace{-4pt}
where $V\left(\boldsymbol{e}_{\mathcal{N}_i}\right)=0$ if and only if $\boldsymbol{e}_{\mathcal{N}_i}=0$, and $V\left(\boldsymbol{e}_{\mathcal{N}_i}\right)>0$ otherwise.

(ii) Exponential Stability Constraint:
\vspace{-7pt}
$$
\begin{aligned}
\inf _{\boldsymbol{u}_i \in \mathbb{U}}\Big[\dot{V}\left(\boldsymbol{e}_{\mathcal{N}_i}, \boldsymbol{u}_i\right)
&=L_f V\left(\boldsymbol{e}_{\mathcal{N}_i}\right)+L_g V\left(\boldsymbol{e}_{\mathcal{N}_i}\right) \boldsymbol{u}_i\Big]
\leq-\beta_3\left(V\left(e_{\mathcal{N}_i}\right)\right)
\end{aligned}
$$
\vspace{-3pt}
ensuring that the formation error $\boldsymbol{e}_{\mathcal{N}_i}$ converges exponentially to zero under the control policy:

$
\pi_{\mathrm{CLF}}=\left\{\boldsymbol{u}_i \in \mathbb{U} \mid \dot{V}\left(\boldsymbol{e}_{\mathcal{N}_i}, \boldsymbol{u}_i\right) \leq-\beta_3\left(V\left(\boldsymbol{e}_{\mathcal{N}_i}\right)\right)\right\}$.
\hfill $\blacksquare$
\end{definition}

Combining \eqref{Eq. 1} with \eqref{Eq.nominalstate}, we can obtain the desired dynamics for the nominal state $\hat{x}_i$:
\begin{equation}\label{eq:nominaldynamics}
\begin{aligned}
\dot{\hat{\boldsymbol{x}}}_i
=\frac{(\sum_{j=1, j \neq i}^n c_{i j}\left(f\left(\boldsymbol{x}_j\right)+g\left(\boldsymbol{x}_j\right) {\boldsymbol{u}}_j\right))+(s_i\left(f\left(\boldsymbol{x}_r\right)+g\left(\boldsymbol{x}_r\right) {\boldsymbol{u}}_r\right))}{\sum_{j=1, j \neq i}^n c_{i j}+s_i} \\
\end{aligned}
\end{equation}
By using \eqref{eq:nominaldynamics}, we can derive the derivative of the Lyapunov function $V_i\left(\boldsymbol{e}_{\mathcal{N}_i}\right)$ as follows:
\begin{equation}
\begin{aligned}
\dot{V}_i\left(\boldsymbol{e}_i, \boldsymbol{u}_i\right)
&=2 \boldsymbol{e}_i^T\left(\dot{\boldsymbol{x}}_i-\dot{\hat{\boldsymbol{x}}}_i\right)  \\
& =\frac{\partial V_i\left(\boldsymbol{x}_i-\hat{\boldsymbol{x}}_i\right)}{\partial \boldsymbol{x}_i} \dot{\boldsymbol{x}}_i+\frac{\partial V_i\left(\boldsymbol{x}_i-\hat{\boldsymbol{x}}_i\right)}{\partial \hat{\boldsymbol{x}}_i} \dot{\hat{\boldsymbol{x}}}_i \\
&=2\left(\boldsymbol{x}_i -\hat{\boldsymbol{x}}_i\right)^T\left(f\left(\boldsymbol{x}_i\right)+g\left(\boldsymbol{x}_i\right) \boldsymbol{u}_i-\dot{\hat{\boldsymbol{x}}}_i\right)
\end{aligned}
\label{Eq. 4}
\end{equation}
Building upon the derivation in \eqref{Eq. 4}, we have the following Lemma to show the formation error $\boldsymbol{e}_{\mathcal{N}_i}$=$\boldsymbol{x}_i-\hat{\boldsymbol{x}}_i$ can converge to 0.
\begin{lemma}
    When the optimal cost function $J_{\text {CLFs }}\left(\hat{\boldsymbol{u}}_i\right)$ reaches zero, the CLF constraint becomes slack, indicating that the dynamics of the system inherently drives the formation error $\boldsymbol{e}_{\mathcal{N}_i}$ to zero, ensuring stability without additional control effort. 
    % (For proof see Supplementary Materials) \wei{this lemma seems wired..can you show some equations}
\end{lemma}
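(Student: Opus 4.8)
The plan is to turn the statement into a short chain of implications: \emph{(i)} $J_{\mathrm{CLFs}}(\hat{\boldsymbol{u}}_i)=0$ forces the optimal predicted input to vanish identically on the horizon; \emph{(ii)} the zero input, being the minimizer of \eqref{Eq. 3}, is in particular feasible, so it satisfies the CLF constraint, meaning the constraint is met by the drift dynamics alone and imposes no active control demand — this is the asserted ``slackness''; \emph{(iii)} the resulting scalar inequality on $V_i$ integrates to exponential decay of $V_i$, hence of $\boldsymbol{e}_{\mathcal{N}_i}$.

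For \emph{(i)}, since the running cost $\hat{\boldsymbol{u}}_i^2$ is nonnegative and the predicted input is piecewise continuous, $\int_t^{t+T}\hat{\boldsymbol{u}}_i^2\,d\tau = 0$ implies $\hat{\boldsymbol{u}}_i^{\star}(\tau)\equiv 0$ on $[t,t+T]$. For \emph{(ii)}, feasibility of $\hat{\boldsymbol{u}}_i^{\star}=0$ gives $\dot V_i(\boldsymbol{e}_{\mathcal{N}_i},0)+\beta V_i(\boldsymbol{e}_{\mathcal{N}_i})\le 0$; substituting $\boldsymbol{u}_i=0$ into \eqref{Eq. 4} rewrites this as $2\boldsymbol{e}_{\mathcal{N}_i}^{T}\big(f(\boldsymbol{x}_i)-\dot{\hat{\boldsymbol{x}}}_i\big)\le -\beta V_i(\boldsymbol{e}_{\mathcal{N}_i})$, i.e., the decrease of $V_i$ is generated entirely by the open-loop term with no contribution from $L_gV_i\,\boldsymbol{u}_i$. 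For \emph{(iii)}, this is precisely $\dot V_i \le -\beta V_i$ along the (uncontrolled) closed-loop trajectory, so the comparison lemma gives $V_i(\boldsymbol{e}_{\mathcal{N}_i}(\tau))\le V_i(\boldsymbol{e}_{\mathcal{N}_i}(t))\,e^{-\beta(\tau-t)}$; since $V_i(\boldsymbol{e}_{\mathcal{N}_i})=\|\boldsymbol{e}_{\mathcal{N}_i}\|^2$ is positive definite by Definition~\ref{definition:clf}(i), this forces $\boldsymbol{e}_{\mathcal{N}_i}\to 0$ exponentially. Re-applying the argument at each sampling instant at which the optimal cost is zero — and noting that $V_i$ is then nonincreasing, so the zero input remains feasible at the next step — propagates the conclusion through the receding-horizon iterations.

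I expect the main obstacle to be the self-consistency of \emph{(ii)}--\emph{(iii)} under receding horizon rather than any single computation: one must ensure that ``$J=0$ at time $t$'' genuinely yields the CLF inequality with $\boldsymbol{u}_i=0$ (this relies on the optimum being attained over a nonempty feasible set, which is exactly what Assumption~2 --- existence of a stabilizing local feedback --- guarantees), and that the finite-horizon decay of $V_i$ aggregates into asymptotic convergence of the actual closed-loop formation error as the horizon recedes. The local Lipschitz regularity of $f$ and $g$ from Definition~1 is what makes the predicted trajectory and $\dot V_i$ well defined, so that the comparison lemma is applicable.
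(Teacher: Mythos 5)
Your proof is correct, but it takes a genuinely different route from the paper's. The paper argues through the Lagrangian and the KKT conditions: it forms $\mathbb{L}(\boldsymbol{x}_i,\hat{\boldsymbol{u}}_i,\lambda_i,\mu_i)$, invokes complementary slackness to conclude that the multiplier $\lambda_i^*$ on the CLF constraint vanishes when the constraint is strictly satisfied, observes that the problem then collapses to MPC without the CLF constraint whose minimizer is $\boldsymbol{u}_i^*=\mathbf{0}$, and finally substitutes back into \eqref{Eq. 3} to conclude $\boldsymbol{x}_i^*=\hat{\boldsymbol{x}}_i$. There, ``slack'' is given its optimization-theoretic meaning ($\lambda_i^*=0$, constraint inactive). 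You instead bypass duality entirely: nonnegativity of the integrand gives $\hat{\boldsymbol{u}}_i^{\star}\equiv 0$ directly from $J=0$, primal feasibility of that zero input yields the CLF inequality with the drift alone, and the comparison lemma converts $\dot V_i\le-\beta V_i$ into explicit exponential decay of $\boldsymbol{e}_{\mathcal{N}_i}$. Your version is more elementary (it needs no constraint qualification or strong-duality assumption), gives a quantitative convergence rate rather than the paper's somewhat abrupt conclusion that $\boldsymbol{x}_i^*=\hat{\boldsymbol{x}}_i$, and your closing discussion of propagating the bound across receding-horizon iterations addresses a gap the paper does not touch. What it gives up is the literal reading of ``slack'': you show the constraint is satisfied by the uncontrolled dynamics, but not that it is inactive (strict inequality), which is what the multiplier-based argument captures; for the convergence claim this distinction is immaterial, since the non-strict inequality already suffices for the comparison lemma. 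Both arguments are sound on their own terms, and yours is arguably the tighter of the two.
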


% \textit{(The proof follows from the Karush-Kuhn-Tucker (KKT) conditions and is presented in Supplementary Materials.)}
\begin{proof}
  To prove this Lemma, we analyze the MPC-CLF framework using the duality theorem and the Karush-KuhnTucker (KKT) conditions \cite{boyd2004convex}. We begin by formulating the Lagrangian:
$$
\begin{aligned}
\mathbb{L}\left(\boldsymbol{x}_i, \hat{\boldsymbol{u}}_i, \lambda_i, \mu_i\right)= & J_{\mathrm{CLFs}}\left(\hat{\boldsymbol{u}}_i\right) \\
& +\lambda_i\left(\dot{V}_i\left(\boldsymbol{x}_i-\hat{\boldsymbol{x}}_i, \hat{\boldsymbol{u}}_i\right)+\beta V_i\left(\boldsymbol{x}_i-\hat{\boldsymbol{x}}_i\right)\right) \\
& +\mu_i\left(\dot{\boldsymbol{x}}_i-f(\boldsymbol{x}_i)+g(\boldsymbol{x}_i)\hat{\boldsymbol{u}}_i\right)
\end{aligned}
$$
where $\lambda_i$ and $\mu_i$ are the Lagrange multipliers.
Let $\left(\boldsymbol{x}_i^*, \boldsymbol{u}_i^*, \lambda_i^*, \mu_i^*\right)$ be the optimal primal-dual solution. The KKT conditions are as follows.
\begin{enumerate}
    \item Stationarity:
$
\left.\nabla_{\left(\boldsymbol{x}_i, \boldsymbol{u}_i\right)} L\left(\boldsymbol{x}_i, \boldsymbol{u}_i, \boldsymbol{\lambda}_i, \boldsymbol{\mu}_i\right)\right|_{\left(\boldsymbol{x}_i^*, \boldsymbol{u}_i^*, \boldsymbol{\lambda}_i^*, \boldsymbol{\mu}_i^*\right)}=0
$
\item Primal Feasibility:
$
\begin{aligned}
& \dot{\boldsymbol{x}}_i^*=f\left(x_i\right)+g\left(x_i\right) \boldsymbol{u}_i^* \\
& \dot{V}_i\left(\boldsymbol{x}_i^*-\hat{\boldsymbol{x}}_i, \boldsymbol{u}_i^*\right)+\beta V_i\left(\boldsymbol{x}_i^*-\hat{\boldsymbol{x}}_i\right) \leq 0
\end{aligned}
$
\item Dual Feasibility:
$
\lambda_i^* \geq 0
$
\item Complementary Slackness:
$$
\lambda_i^*\left(\dot{V}_i\left(\boldsymbol{x}_i^*-\hat{\boldsymbol{x}}_i, \boldsymbol{u}_i^*\right)+\beta V_i\left(\boldsymbol{x}_i^*-\hat{\boldsymbol{x}}_i\right)\right)=0.
$$
We assume 
$
\lambda_i^*\left(\dot{V}_i\left(\boldsymbol{x}_i^*-\hat{\boldsymbol{x}}_i, \boldsymbol{u}_i^*\right)+\beta V_i\left(\boldsymbol{x}_i^*-\hat{\boldsymbol{x}}_i\right)\right)<0
$
According to the complementary slackness condition in KKT, it it follows that $\lambda_i^*=0$. 
$
\begin{aligned}
\mathbb{L}\left(\boldsymbol{x}_i, \hat{\boldsymbol{u}}_i, \lambda_i, \mu_i\right)= & J_{\mathrm{CLFs}}\left(\hat{\boldsymbol{u}}_i\right) +\mu_i\left(\dot{\boldsymbol{x}}_i-f(\boldsymbol{x}_i)+g(\boldsymbol{x}_i)\hat{\boldsymbol{u}}_i\right)
\end{aligned}
$, which is equivalent to MPC without CLF constraints.
\end{enumerate}

Furthermore, the optimal control solution in this case is $\boldsymbol{u}_i^*=\mathbf{0}$. Substituting this into Eq. \ref{Eq. 3} shows that Lyapunov's second law can only be satisfied when $\boldsymbol{x}_i^*=\hat{\boldsymbol{x}}_i$, and consequently, $J_{\mathrm{CLFs}}\left(\hat{\boldsymbol{u}}_i\right)$ equals zero.  
\end{proof}
% \wei{if new, say this is new, and why it is special for 'dogs'. This can be a nice contribution}
This is particularly important for quadrupedal robots, as it establishes a theoretically sound stability condition despite their complex, high-dimensional dynamics and gait variability.
\vspace{-8pt}
\subsection{SAPIE for Dynamic Neighborhood Adaptation}
\vspace{-5pt}
\label{sec:sapie}
In practical multiple quadrupedal robotic systems, we often encounter scalability and permutation invariance problems. Therefore, effectively processing dynamically changing neighbor sets while ensuring safety constraints remain valid is a critical challenge (The control law \cite{cavorsi2023multirobot}
$
u_i=\sum_{j \in \mathcal{N}_i(t)}\left(f_{\text {repel }}\left(p_i-p_j\right)+f_{\text {align }}\left(v_i, v_j\right)\right)
$
where $\mathcal{N}_i(t)$ denotes the set of neighbors of robot $i$ at time $t, f_{\text {repel }}\left(p_i-p_j\right)$ represents a repulsive force function to prevent collisions, $f_{\text {align }}\left(v_i, v_j\right)$ captures an alignment force to ensure velocity consistency, and $u_i$ is the control input determined by the cumulative influence of all current neighbors.

This formulation necessitates real-time adaptation of $u_i$ as neighbors dynamically join or leave $\mathcal{N}_i(t)$.
While the summation structure guarantees permutation invariance, ensuring stability and collision avoidance in an evolving topology remains nontrivial.) Traditional control methods often assume a fixed number of neighbors, but in real-world deployments, neighbor count fluctuates, and their order changes arbitrarily. This variability introduces two key challenges: scalability and permutation invariance. If a control strategy fails to adapt to dynamic neighbor variations or produces inconsistent outputs based on neighbor ordering, it risks violating safety constraints, compromising system stability and reliability.

To address these challenges, we propose a new Scale Adaptive Permutation-Invariant Encoding (SAPIE) mechanism. SAPIE enables quadrupedal robots to efficiently encode dynamic neighbor information while ensuring invariance to neighbor order in control policies. SAPIE takes each neighbor’s state, maps it through a small network $\phi_j$, and then uses an order-independent pooling (max) so that whether a robot has 5 or 10 neighbors, the output is a fixed-length representation. The core idea is to apply shared nonlinear transformations to map neighbor states into fixed-length feature vectors and normalize them through symmetric pooling operations, ensuring robustness to variations in neighbor count and order.
Mathematically, the state of each neighboring agent is transformed using a shared nonlinear function:
% To enhance the safe control \wei{safe control or control safety} of quadruped robots in dynamic environments, \wei{you can't say this, it is too vague. you should be very specific what problem you solve}we introduce the Scale Adaptive Permutation-Invariant Encoding (SAPIE) mechanism. SAPIE enables CBF to handle the dynamic variation in the number of neighboring agents while preserving permutation invariance in input observations. Since the number and order of the neighbors $\left|\mathcal{N}_i(t)\right|$ change over time, SAPIE applies a shared nonlinear mapping to transform the state of each neighbor $\boldsymbol{o}_i$ into a fixed-dimensional feature space: 
$
\phi_i = \sigma(W \boldsymbol{\boldsymbol{o}_i}),
$
where $W \in \mathbb{R}^{p \times n}$ is a weight matrix, and $\sigma(\cdot)$ is a nonlinear activation function. Since the number of neighbors $\left|\mathcal{N}_i(t)\right|$ varies, SAPIE aggregates all neighbor feature vectors using row-wise pooling to maintain permutation invariance:
\vspace{-5pt}
\begin{equation}
\begin{aligned}
\psi(\boldsymbol{o}_i) = \text{SAPIE}(\phi_1, \phi_2, \dots, \phi_{|\mathcal{N}_i(t)|})
\end{aligned}
\label{Eq.sapie}
\end{equation}
which maps a dynamically sized input $[\phi_1, \phi_2, \dots, \phi_{|\mathcal{N}_i(t)|}]$ to a fixed-length feature vector $\psi(\boldsymbol{o}_i)$. To ensure permutation invariance, SAPIE employs row-wise max pooling, ensuring that the output feature vector remains unchanged regardless of the number or order of neighbors:
$$
\psi(\boldsymbol{o}_i): \mathbb{R}^{n \times\left|\mathcal{N}_i(t)\right|} \mapsto \mathbb{R}^s.
$$
\vspace{-2pt}
% where $\psi(\boldsymbol{o}_i)$ maps a matrix $\boldsymbol{o}_i$ whose column has dynamic dimension and permutation to a fixed length feature vector $\psi(\boldsymbol{o}_i)\in \mathbb{R}^p$
% To aggregate these features while maintaining permutation invariance, a row-wise maximum pooling operation is applied ensuring that the resulting feature vector remains fixed in size, independent of the number or order of neighbors. Consider the following mapping $\psi(\boldsymbol{o}_i): \mathbb{R}^{n \times\left|\mathcal{N}_i(t)\right|} \mapsto \mathbb{R}^s$.
We will demonstrate the effectiveness of SAPIE in Scenario 1: Safe Navigation in Obstacle-Dense Environment (see Section~\ref{sec:Scenario 1}). Specifically, we evaluate how SAPIE enables quadrupedal robots to adapt to varying neighbor counts while keeping stable formation control. In our experiments, robots navigate through complex environments with dynamically changing neighbor configurations, ensuring consistent performance despite fluctuations in neighbor numbers and order. The results in Fig.~\ref{fig7} confirm that SAPIE effectively encodes neighbor information, allowing the controller to prioritize safety while preserving formation integrity.
\vspace{-5pt}
\subsection{SAPIE-CBF Based on Neural Networks}\label{sec:nn-cbf}
\vspace{-3pt}
Typically, the current CBF methods cannot be used directly for dynamic multi-agent environments, as we introduced in Section~\ref{sec:sapie}, due to their dependence on fixed neighborhood structures.
% \wei{Yes, I understand what you say. Since the first sentence is really important, can you re-write to make clearer} 
% To address this, we introduce a neural network\wei{again, I don't see the logic why nn has to be used}-based CBF framework that adapts to varying agent densities while maintaining safety constraints.
To address this, we introduce deep neural network to approximate $\psi(\boldsymbol{o}_i) $ in \eqref{Eq.sapie},
% \wei{again, I don't see the logic why nn has to be used}-based CBF framework 
that adapts to varying agent densities while maintaining safety constraints.

% To enhance the safety of quadruped robot formation control in dynamic environments, \wei{Ok, To enhance .... is the motivation you say the next sentence. I don't buy this logic!}we incorporate a neural network-based Control Barrier Function (CBF)\wei{this sensetence is repeat last paragraph}. The learned CBF refines control signals to ensure compliance with decentralized safety constraints\wei{i don't what you say}. 
Based on Definition~\ref{definition:cbfloss}, the empirical loss function for training the CBF is: 
\begin{equation}
\begin{aligned}
\mathbf{L}_{CBF}=\Sigma_i {L}^i_{CBF}
\end{aligned}
\label{Eq.tatalcbf}
\end{equation}
where we propose a novel loss function ${L}^i_{CBF}$ for each quadruped robot $i$:
\begin{equation}
\begin{aligned}
L_{C B F }^i\left(\theta_i\right)
&=\sum_{\boldsymbol{x}_i \in \mathbb{X}_{\text {safe,i }}} \max \left(0, \gamma-h_i^{\theta_i}\left(\boldsymbol{x}_i, \boldsymbol{o}_i\right)\right)
\\
&+\sum_{\boldsymbol{x}_i \in \mathbb{X}_{\text {unsafe,i}}} \max \left(0, \gamma+h_i^{\theta_i}\left(\boldsymbol{x}_i, \boldsymbol{o}_i\right)\right)
\\
&+\sum_{\boldsymbol{x}_i \in \mathbb{X}_h} \max \Big(0, \gamma-L_{f} h_i^{\theta_i}\left(\boldsymbol{x}_i, \boldsymbol{o}_i\right)
\\
& \ \ \ \ -L_{g} h_i^{\theta_i}\left(\boldsymbol{x}_i, \boldsymbol{o}_i\right) \cdot \boldsymbol{u}_i-\alpha\left(h_i^{\theta_i}\left(\boldsymbol{x}_i, \boldsymbol{o}_i\right)\right)\Big)
\end{aligned}
\label{Eq.lcbf}
\end{equation}
%\wei{if this is new, maybe you change the name of nn-cbf to something else. nn-cbf sounds very generic}
On the right-hand side of \eqref{Eq.lcbf}, the first two terms enforce the safety classification; and the third term ensures forward invariance, i.e., if the system's initial state is within the safe set, all subsequent system states will remain within the safe set.
$\gamma$ is the margin for the satisfaction of the CBF condition in Definition~\ref{definition:cbfloss}. We choose $\gamma=10^{-3}$ in implementation. $\theta_i$ is neural network parameter. $\mathbb{X}_{\text {s,i }}$ is the safe state set, $\boldsymbol{x}_i \in \mathbb{X}_{\mathrm{unsafe}, \mathrm{i}}$ is the unsafe state set, where \( h_i(\boldsymbol{s}_i, \boldsymbol{o}_i) < 0 \). The set $\mathbb{X}_h$ represents the state trajectories where the CBF constraint is evaluated.

In practice, the control input $\hat{u}_i$ which is solved by MPC-CLF~\eqref{Eq. 3} cannot be used directly because without the predefined trajectory solved in \eqref{Eq.lcbf}, the follower will simply terminate the motion in front of obstacles to maintain safety.
Therefore, the control input needs to be refined to
remain close to the original MPC-CLF output $\hat{\boldsymbol{u}}_i$, we then introduce a regularization loss:
$$
L^i_u=\left\|\check{\boldsymbol{u}}_i-\hat{\boldsymbol{u}}_i\right\|^2.
$$
The total loss function for training the CBF network is:
$$
L=\mathbf{L}_{CBF}+\sigma \sum_i L^i_u
$$
The refined control input $\check{\boldsymbol{u}}_i$ will be used in practice to simultaneously keep formation (MPC-CLF) while avoiding obstacles (SAPIE-CBF).
$\sigma$ is a weighting factor balancing safety and control performance.

To conclude Sections~\ref{sec:mpc-clf}, ~\ref{sec:sapie} and~\ref{sec:nn-cbf}, we present the training pipeline in Fig.~\ref{fig:framework}.

% \begin{algorithm}
% \caption{\wei{placeholder}}
% \label{TEalgorithm}
% \KwIn{\textit{Terrain}, \textit{P$_{contact}$}, \textit{P$_{foot}$}}
% \KwOut{\textit{Terrain}, \textit{Plane}}

% % Step 1: Compute planes and probabilities for each leg
% \For{\textit{legIdx} $\gets 0$ \textbf{to} $3$}{
%     \textit{Points} $\gets$ NeighborPoints(\textit{legIdx}, \textit{P$_{foot}$})\;
%     \textit{locPlane}(\textit{legIdx}) $\gets$ ComputePlane(\textit{Points})\;
%     \textit{probPlane}(\textit{legIdx}) $\gets$ Mean(\textit{P$_{contact}$}.at(\textit{Points}))\;
% }

% % Step 2: Traverse all points in terrainEst
% \For{\textit{Px} $\in$ \textit{Terrain.x}}{
%     \For{\textit{Py} $\in$ \textit{Terrain.y}}{
%         \textit{Pos} $\gets$ (\textit{Px}, \textit{Py})\;
%         \textit{NumUnderPlanes} $\gets$ 0\;

%         % Check how many planes cover the current position
%         \For{\textit{Idx} $\gets 0$ \textbf{to} $3$}{
%             \If{\textnormal{IsUnderPlane}(\textit{Pos},\textit{locPlane}(\textit{Idx}))}{
%                 \textit{NumUnderPlanes}++
%             }
%         }

%         % Update terrain estimation based on the number of planes
%         \If{\textit{NumUnderPlanes} = 0}{
%             \textit{Terrain.at}(\textit{Pos}) $\gets$ \textit{Terrain.at}(\textit{Pos})\;
%         }
%         \Else{
%             \textit{Terrain.at}(\textit{Pos}) $\gets$ Update(\textit{locPlane}, \textit{probPlane})\;
%         }
%     }
% }
% \textit{Plane} $\gets$ pointCloud2Plane(\textit{terrainEst})
% \end{algorithm}

\subsection{Deadlock Resolution}
Another challenge in real-world deployment is deadlocks in quadrupedal robot formation control that arise when restrictive CBF constraints conflict with the MPC-CLF framework's control input, preventing motion in high-density environments or constrained spaces where mutual constraints make movement infeasible without violating safety conditions. For example, in practice, when multiple quadrupedal robots navigate a narrow passage, the restrictive CBF constraints may prevent any robot from moving forward despite feasible nominal commands, leading to a standstill. Given the tight coupling between motion control and gait coordination in quadrupedal locomotion, directly modifying the control input may destabilize the gait or disrupt step transitions.  To address this, we propose an event-triggered deadlock resolution mechanism that adjusts the nominal state, applying a controlled perturbation through transformation matrix $T$ instead of the control input, ensuring that the stability constraints of MPC-CLF remain intact while satisfying the safety constraints of CBF to restore motion feasibility.
Formally, we define the event-triggering deadlock indicator as:
% Deadlocks occur when the MPC-CLF nominal control input conflicts with CBF constraints, resulting in zero actual input despite a nonzero nominal command. This situation prevents motion while maintaining safety, necessitating a resolution mechanism.
$$
E_i=\operatorname{sign}\left(\left\|J_{\mathrm{CLFs}}-L_{\mathrm{CBFs}}\right\|+\left\|\boldsymbol{v}_i\right\|\right)-\operatorname{sign}\left(J_{\mathrm{CLFs}}\right)
$$
where a deadlock is detected when $E_i<0$, triggering an adjustment to the nominal state rather than perturbing the control input, ensuring MPC-CLF constraints remain intact. Upon the deadlock detection being triggered, the nominal state $\hat{\boldsymbol{x}}_i=\frac{\sum_{j=1, j \neq i}^n c_{i j}\left(x_j+\Delta_{ij}\right)+s_i\left(x_r+\Delta_{i r}\right)}{\sum_{j=1, j \neq i}^n c_{i j}+s_i}$ in \eqref{Eq.nominalstate} is modified as:
$
\tilde{\boldsymbol{x}}_i=T\left(\hat{\boldsymbol{x}}_i-\boldsymbol{x}_i\right)+\boldsymbol{x}_i
$
The transformation matrix $T$ applies a controlled perturbation to shift the nominal reference, ensuring a nonzero control input.
\begin{figure}
	\centering
	\includegraphics[width=\columnwidth]{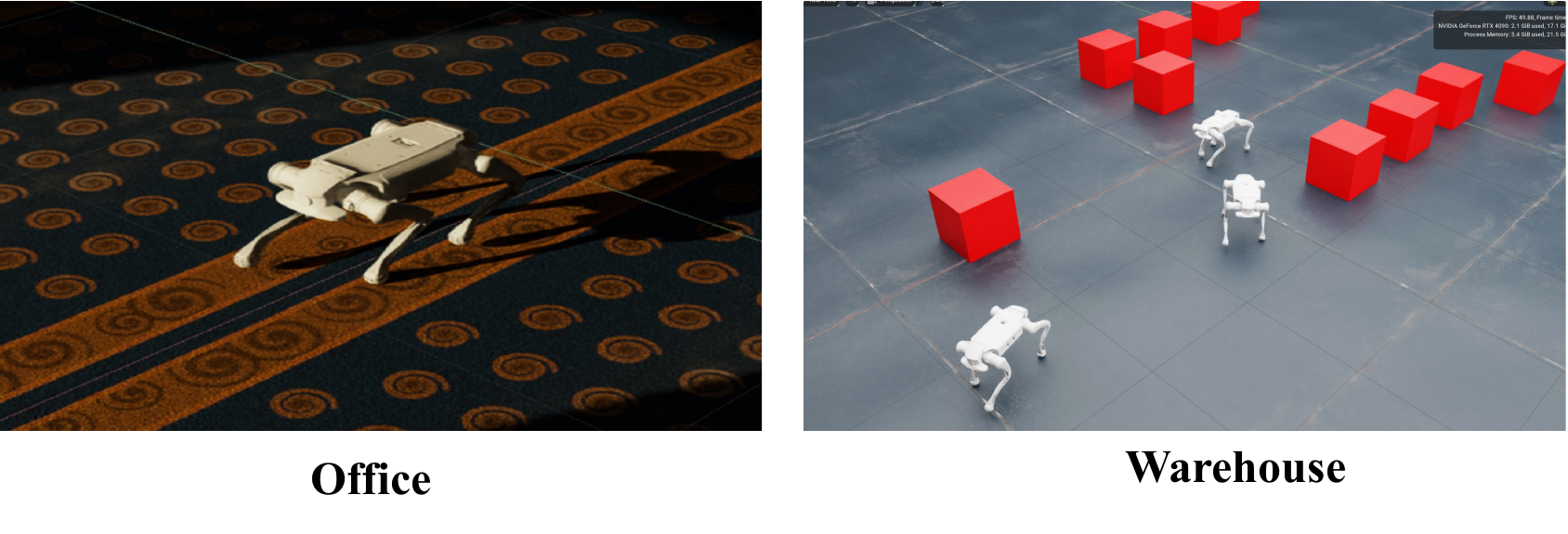}
 \vspace{-23pt}
	\caption{The digital twin of XG, an autonomous quadrupedal robot, shown operating in two simulated environments (Office and Warehouse). These high-fidelity simulations facilitate robust development and Sim2Real transfer.}
	\label{fig:twin}
 \vspace{-7pt}
\end{figure}
If a deadlock occurs $\left(E_i<0\right)$, the event-triggered mechanism adjusts the nominal state $\hat{\boldsymbol{x}}_i$ to $\tilde{\boldsymbol{x}}_i$, thereby modifying the nominal control input $\hat{\boldsymbol{u}}_i$. As long as at least one feasible component of $\hat{\boldsymbol{u}}_i$ remains nonzero, the actual control input $\check{\boldsymbol{u}}_i$ will also be nonzero in the subsequent step, effectively resolving the deadlock and restoring motion feasibility while preserving the stability constraints of the MPC-CLF framework. Notably, this mechanism is triggered only once per deadlock occurrence.

\section{ SIMULATIONS AND EXPERIMENTS}

\subsection{System Design}
\begin{figure}
	\centering
	\includegraphics[width=\columnwidth]{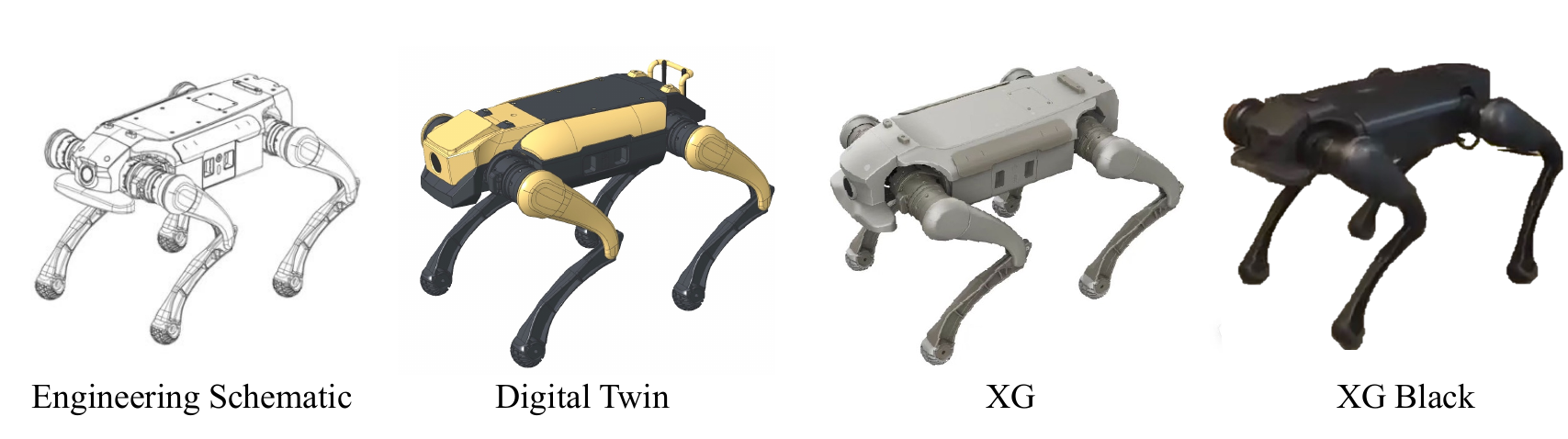}
    \vspace{-0.8cm}
	\caption{Evolution of the XG quadrupedal robot platform. From left to right: Engineering schematic showing the mechanical design and joint placement; Digital twin model implemented in NVIDIA Isaac Sim for high-fidelity simulation; Standard XG robot hardware implementation; XG Black variant with modified chassis configuration for enhanced payload capacity and experimental deployments.}
	\label{fig:dogs}
 \vspace{-12pt}
\end{figure}

\begin{figure}
	\centering
	\includegraphics[width=\columnwidth]{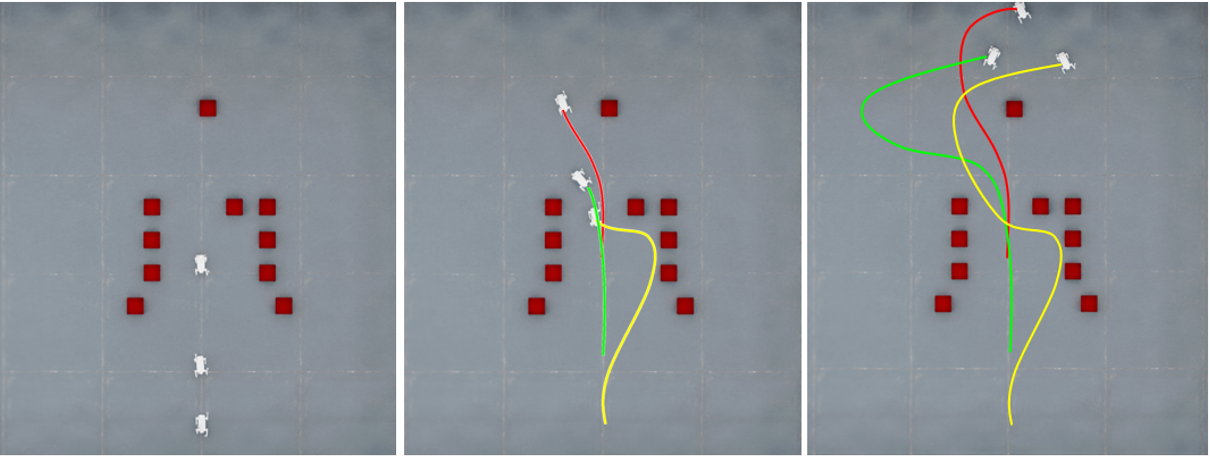}
    \vspace{-0.4cm}
	\caption{ Snapshots of the simulation environment showing formation evolution over time. Left: Initial configuration with five quadrupedal robots (white) and static obstacles (red cubes) arranged in a corridor formation. Middle: Intermediate state showing robots navigating through the obstacle field with color-coded trajectories (yellow, green, red) indicating individual robot paths. Right: Final configuration demonstrating successful passage through the obstacle field while maintaining inter-robot coordination. }
	\label{fig6}
 \vspace{-12pt}
\end{figure}

\subsubsection{Quadruped robot}
% The quadruped robot is a 12-degrees-of-freedom (DoF) platform equipped with a battery pack, Wi-Fi for real-world experiments. It features a Mid-360 with 40-beam LiDAR, a vertical FoV of $59^{\circ}$, and a $360^{\circ}$ of FoV below horizon to eliminate blind zone. Two RK3588-based boards, each featuring a big.LITTLE architecture with 4× Cortex-A76 cores (2.4 GHz) and 4× Cortex-A55 cores (1.8 GHz), as well as 16 GB of memory, are used to run our planning algorithm and locomotion algorithm, respectively.
% In Fig.~\ref{fig:dogs}, we show the  Engineering Drawing (left), simulator in digital twins (middle) and our customer designed XG robot.

The quadruped robot platform features 12 degrees-of-freedom (DoF) and incorporates an onboard battery pack with Wi-Fi connectivity for autonomous field operations. The sensing suite integrates a Mid-360 LiDAR system with 40-beam configuration, offering a $59^{\circ}$ vertical field of view (FoV) and complete $360^{\circ}$ horizontal coverage below the horizon, effectively eliminating perceptual blind spots. Computational resources comprise dual RK3588-based processing boards ($4\times$ Cortex-A76 cores at 2.4 GHz paired with $4\times$ Cortex-A55 cores at 1.8 GHz) with 16 GB of RAM. These boards are dedicated to the planning and locomotion control algorithms, respectively, ensuring real-time performance.
Fig.~\ref{fig:dogs} presents the system's evolution from engineering schematic (left most), through simulation digital twin environment (second from left), to the fully realized XG robot hardware implementations (right two variants) with different chassis configurations and onboard equipment.

\subsubsection{Digital Twins in NVIDIA Omniverse Isaac Sim}
To facilitate development and validation, we created high-fidelity digital twins in NVIDIA Omniverse Isaac Sim. These simulations precisely replicate the physical properties of real-world environments, with our quadruped robot USD models successfully integrated into the simulator. All critical parameters—including dimensions, mass distribution, actuator torque limits, and joint constraints—were meticulously calibrated to match their hardware counterparts, substantially minimizing the sim-to-real gap. We initially leveraged this environment to train locomotion gaits through reinforcement learning (deployed successfully on physical robots, though details are beyond this paper's scope). Subsequently, we extended our simulations to multi-robot scenarios across diverse environments including warehouses and office spaces, as shown in Fig. \ref{fig:twin}. The virtual testing environments were constructed with geometric fidelity to our physical testbeds, incorporating identical obstacle placements and dimensional constraints. This comprehensive simulation framework enabled thorough validation of our formation control and collision avoidance algorithms in obstacle-dense environments before real-world deployment, significantly accelerating the development cycle while reducing hardware risks.
\begin{figure}
	\centering
	\includegraphics[width=\columnwidth]{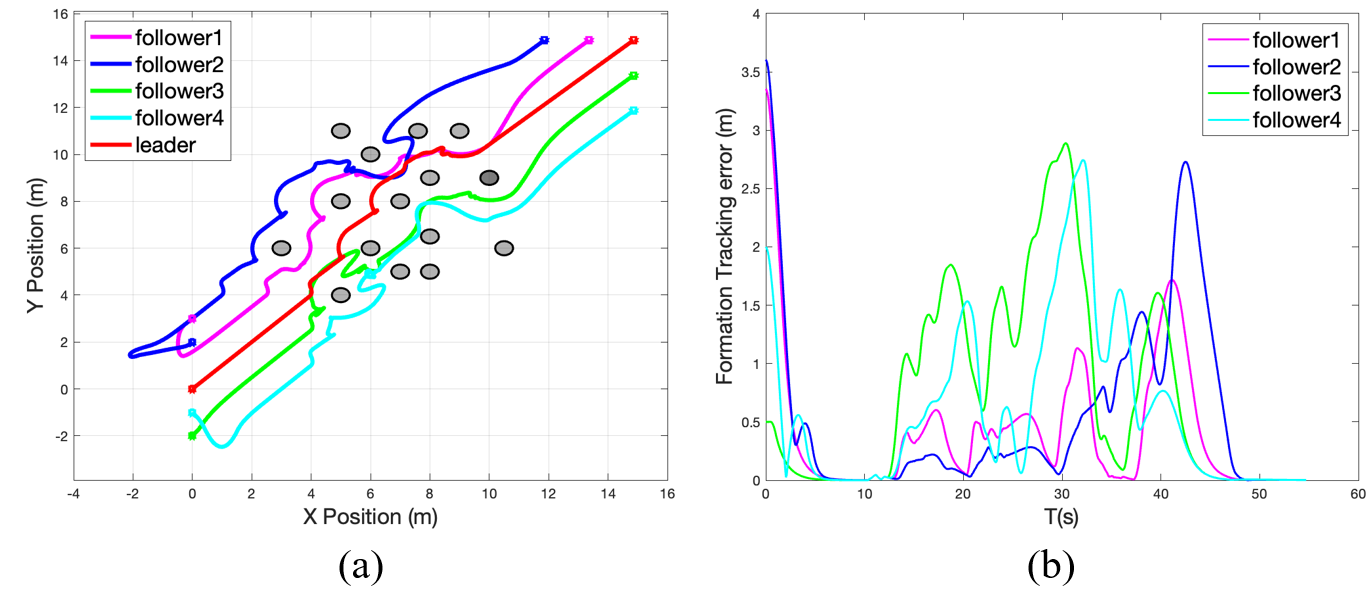}
    \vspace{-0.8cm}
	\caption{Quantitative analysis of formation control in obstacle-dense environments. (a) Planar trajectory plot showing the complete paths of all robots (leader in red, followers in magenta, blue, green, and cyan) navigating through randomly placed obstacles (gray circles). The consistent spacing between trajectories after obstacle negotiation demonstrates successful formation maintenance. (b) Formation tracking error versus time for all follower robots, showing initial error accumulation during obstacle avoidance 
    (peaks at $t\approx 30s$) and subsequent convergence as robots establish stable formation after clearing the obstacle field. The temporary increase in tracking error corresponds directly to obstacle negotiation phases, validating the controller's ability to prioritize collision avoidance while keeping formation whenever feasible.}
	\label{fig7}
 \vspace{-18pt}
\end{figure}
\vspace{-10pt}
\subsection{Simulated Experiments}
\vspace{-5pt}
\subsubsection{Simulation Setup}
The proposed planner has been fully trained within NVIDIA Omniverse and evaluated using our designed quadrupedal robot XG with a pre-trained RL locomotion policy provided by the Orbit Framework \cite{mittal2023orbit}. Our algorithm operates under the ROS2 Humble operating system, an open-source framework widely used for developing robotics and automation systems \cite{macenski2022robot}.

Our hierarchical control architecture is implemented and trained within this simulated environment. As shown in Fig.~\ref{fig6}, we selected the NVIDIA warehouse environment (approximately 25 × 35m) where cuboid obstacles form narrow corridors. Our experiments focus on formation maintenance, collision avoidance, and deadlock resolution in these challenging scenarios.

\subsubsection{Scenario 1: Safe Navigation in Obstacle Dense Environment}\label{sec:Scenario 1}
We evaluated our algorithm across multiple simulation and hardware environments. In the first simulation environment, we deployed $n=5$ quadruped robots operating in a 2D plane with static obstacles (Fig.~\ref{fig:dogs}). Each robot was modeled as a point mass for planning purposes. The CBF implementation utilized a convolutional neural network architecture (4 layers, 64, 128, 64, 1) integrated with our SAPIE encoding mechanism (processing a maximum of 4 neighbors). To ensure robustness, we sampled diverse obstacle configurations and neighbor arrangements during the training phase, generating a comprehensive dataset of safe and unsafe states.

The MPC optimization problem in \eqref{Eq. 3} was solved using the Interior Point Optimizer (IPOPT) \cite{biegler2009large}, an open-source solver specifically designed for nonsmooth, continuous nonlinear optimization challenges. Fig. \ref{fig3} depicts the quadruped robot formation and communication network topology employed throughout our simulations.

Results are presented in Fig. \ref{fig7}, with Fig. \ref{fig7}(a) illustrating the motion trajectories of all five quadruped robots navigating through multiple obstacles positioned near their starting locations. Fig. \ref{fig7}(b) displays the corresponding formation tracking error using our proposed approach.

The trajectory plot in Fig. \ref{fig7}(a) demonstrates that our method successfully guided the quadruped robots to establish and keeping formation while navigating around the positioned obstacles. Initially, the robots operated without a defined formation before encountering the obstacles. During obstacle traversal, they progressively established their formation structure, significantly reducing tracking error. As shown in Fig. \ref{fig7}(b), after clearing the obstacle field, the quadrupedal robots achieved stable formation with tracking errors converging toward minimal values, confirming the effectiveness of our approach in complex environments.
\begin{figure}[t]
	\centering
	\includegraphics[width=0.5\columnwidth]{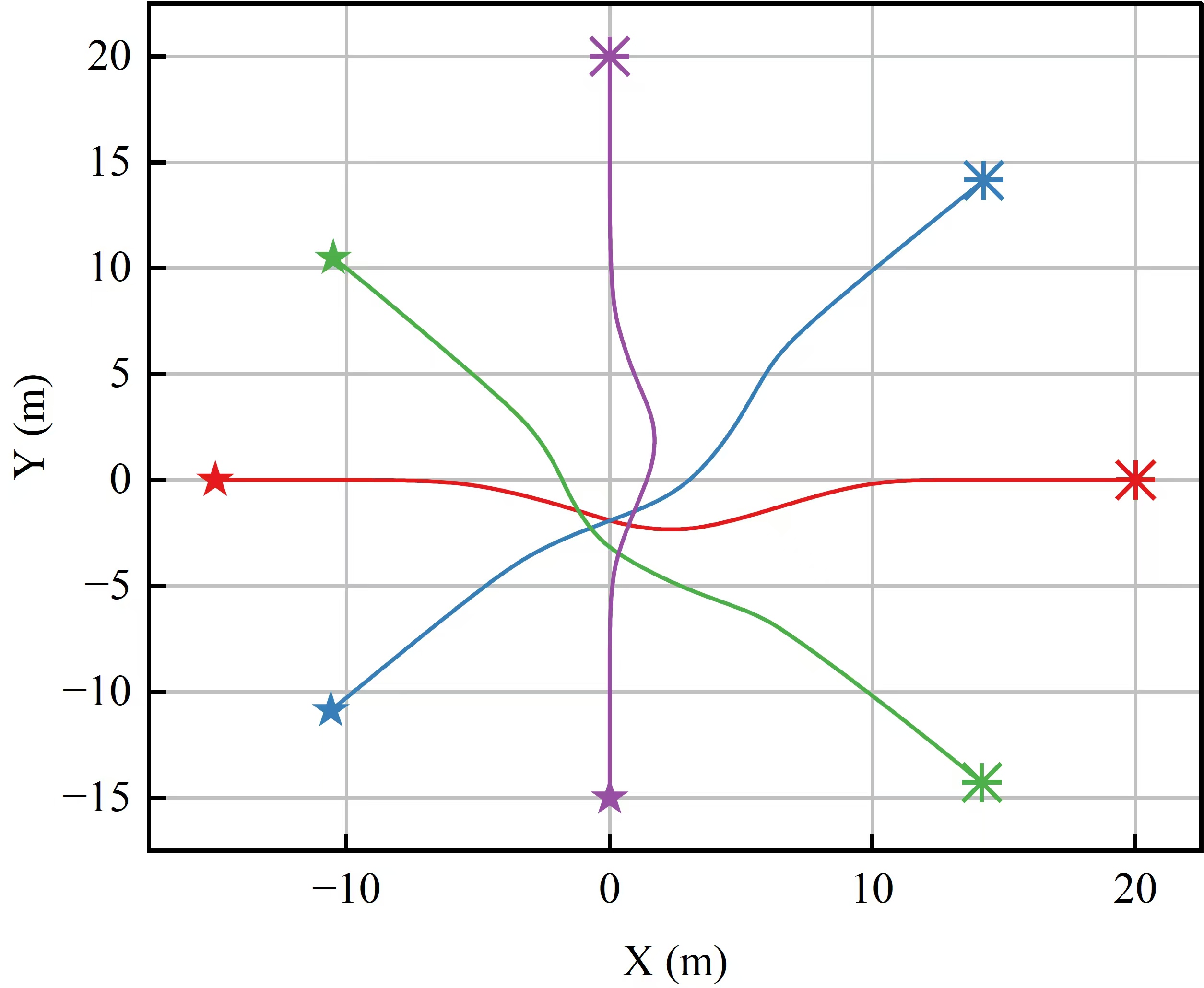}
    \vspace{-10pt}
	\caption{Verification of dynamic collision avoidance among four quadruped robots. The asterisks indicate the starting positions, while the stars mark the final destinations. Each colored trajectory (red, blue, green, and purple) shows the navigation path, with all robots successfully resolving potential deadlocks at the intersection point near the origin.}
	\label{fig8}
 \vspace{-18pt}
\end{figure}

% \subsubsection{Scenario 2: Deadlock}
% In the second scenario, four groups of quadruped robots move toward each other to validate our proposed method for dynamic collision avoidance while also simulating deadlock in real-world environments. As shown in Fig. \ref{fig8} our approach effectively ensures safe inter-robot interactions. Quadrupedal robots autonomously activate the designated mechanism to resolve deadlock. This action substantiates the effectiveness of our algorithm in resolving deadlock conditions and enhancing the robustness of multi-robot navigation.

% To evaluate the adaptability of our method across different environments, we employed digital twins in Isaac Sim to replicate real-world physical properties and deploy the learned policy on quadrupedal robots. As shown in Fig. \ref{fig6}, the simulation results present three snapshots of the experiment along with the trajectories followed by the robots, and the communication graph between the local robots is shown in Fig. \ref{fig6}(a). In this scenario, three quadrupedal robots successfully navigated through a narrow corridor while maintaining the desired formation, avoiding collisions, and reaching the predefined waypoints, thereby accomplishing the task effectively.

\subsubsection{Scenario 2: Deadlock Resolution and Dynamic Collision Avoidance}
In the second scenario, we evaluated our system's performance by orchestrating a challenging interaction where four groups of quadruped robots simultaneously moved toward each other, creating conditions for both collision risk and potential deadlock situations. As illustrated in Fig. \ref{fig8}, our approach effectively ensured safe inter-robot interactions despite the complex crossing trajectories. When deadlock conditions emerged, the quadrupedal robots autonomously activated our proposed resolution mechanism without requiring external intervention. This experimental outcome substantiates the effectiveness of our algorithm in both detecting and resolving deadlock conditions, thereby enhancing the robustness of multi-robot navigation in confined spaces.

To evaluate the adaptability of our method across different environments, we leveraged digital twins in NVIDIA Isaac Sim that accurately replicated real-world physical properties and deployed the learned policy on the quadrupedal robots (Fig.~\ref{fig6}). 
% As shown in Fig. \ref{fig6}, the simulation results present three chronological snapshots of the experiment along with the color-coded trajectories followed by each robot. 
The communication graph between locally interacting robots is depicted in Fig. \ref{fig6}(a), illustrating the distributed nature of our approach. In this challenging scenario, three quadrupedal robots successfully navigated through a narrow corridor while maintaining their desired formation, avoiding both static and dynamic obstacles, and efficiently reaching their predefined waypoints, thereby validating the effectiveness of our integrated control framework.

% \subsection{Real-world Experiments}
% We validated the proposed algorithm in a team of two XG quadrupedal robots to achieve formation control and collision avoidance. These robots are equipped with IMU and LiDAR sensors and utilize DDS for inter-robot communication, reducing communication latency errors. Additionally, we developed a dedicated middleware based on DDS, providing a unified application programming interface (API) for different robots to optimize data synchronization and scalability, laying the foundation for future large-scale swarm control. The decentralized architecture, efficient Quality of Service (QoS) mechanisms, and low-latency characteristics of DDS enable real-time state sharing, enhancing coordination in formation control and collision avoidance. As shown in Fig. \ref{fig9}, our controller successfully avoided obstacles while maintaining the desired formation.
\begin{figure*}[hbt!]
    \centering
    \includegraphics[width=\linewidth]{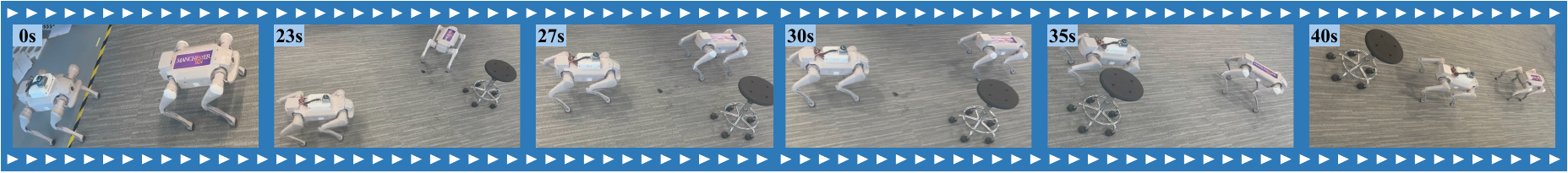}
    \vspace{-14pt}
    \caption{Verification of dynamic collision avoidance among two quadruped robots in a real-world environment, captured at six different timestamps (0 s, 23s, 27s, 30s, 35s, and 40s). The sequence shows our XG quadrupedal robots navigating around a dynamic obstacle (office chair) while maintaining formation. Initially (0s), the robots begin in a line formation with the leader carrying a purple marker. As the sequence progresses, they detect the obstacle, execute real-time trajectory adjustments (23s–35s), and successfully navigate past it while preserving their relative positions. By the final frame (40s), both robots have safely negotiated the obstacle and resumed their desired formation, demonstrating the effectiveness of our DDS-based communication architecture and distributed control framework in physical environments.}
    \label{fig9}
    \vspace{-18pt}
\end{figure*}
\vspace{-7pt}
\subsection{Real-world Experiments}
\vspace{-5pt}
\subsubsection{Formation Control with XG Platform}
We validated the proposed algorithm with a team of two XG quadrupedal robots performing formation control and collision avoidance tasks. These robots are equipped with IMU and LiDAR sensors and utilize Data Distribution Service (DDS) for inter-robot communication. Additionally, we developed a dedicated middleware based on DDS that provides a unified application programming interface (API) across different robot platforms to optimize data synchronization and scalability, establishing a foundation for future large-scale swarm control applications.

The decentralized architecture, efficient Quality of Service (QoS) mechanisms, and low-latency characteristics of DDS enable real-time state sharing, enhancing coordination capabilities for both formation maintenance and obstacle avoidance. As shown in Fig. \ref{fig9}, our controller successfully navigated around dynamic obstacles while maintaining the desired formation between robots throughout the experiment.

\subsubsection{Cross-Platform Validation}
We further demonstrated the generalizability of our approach by deploying the same control strategy on a second pair of quadrupedal robots (\ref{fig:dogs}) with different physical characteristics. These robots, designed for heavy payload transportation, feature different materials and structural designs to accommodate higher load-bearing capacity and enhance stability. Despite these significant physical differences from the original XG platform, our formation control framework performed robustly without requiring model recalibration, successfully maintaining desired formations and avoiding obstacles. This cross-platform validation highlights the generalizability of our approach, suggesting its potential applicability across heterogeneous quadrupedal platforms with minimal adaptation requirements.

\section{Conclusion and Future Work}
\label{sec:5}
% This paper presents a novel distributed formation control framework for multiple quadrupedal robots operating in complex, dynamic environments. By integrating DMPC with CLFs for formation stability and a neural network-based CBF with SAPIE encoding for decentralized safety enforcement, our approach effectively balances formation maintenance, obstacle avoidance, and real-time feasibility. A low-latency DDS-based communication architecture enhances coordination reliability, while an event-triggered deadlock resolution mechanism ensures continuous motion in constrained environments. Extensive simulations in NVIDIA Omniverse Isaac Sim, along with real-world experiments using our custom quadrupedal robotic system, XG, demonstrate that the proposed framework enables stable formation control and effective collision avoidance, even in challenging scenarios with dynamic obstacles and narrow passages. Once sufficiently refined, we plan to open-source them to benefit the research community. Future work will focus on extending our approach to multi-agent visual-LiDAR SLAM.
This paper presents a distributed formation control framework for quadrupedal robots in dynamic environments. By integrating DMPC with CLFs for stability and a neural network-based CBF with SAPIE encoding for decentralized safety, our approach ensures formation maintenance, collision avoidance, and real-time feasibility. A low-latency DDS-based communication system enhances coordination, while an event-triggered mechanism resolves deadlocks in constrained spaces. Extensive simulations in NVIDIA Omniverse Isaac Sim and real-world experiments on our XG quadrupeds validate its robustness in complex scenarios. Future work includes extending to multi-agent visual-LiDAR SLAM and potential open-source release.

\vspace{-1pt}

\bibliography{main}
\bibliographystyle{IEEEtran}
%
%
%%%%%%%%%%%%%%%%%%%%%%%%%%%%%%%%%%%%%%%%%%%%%%%%%%%%%%%%%%%%%%%%%%%%%%%%%%%%%%%%%
%
%References are important to the reader; therefore, each citation must be complete and correct. If at all possible, references should be commonly available publications.
%

\end{document}